\title{Secure Communication in Multifunctional Active Intelligent Reflection Surface-assisted Systems}
\author{\IEEEauthorblockN{Shaokang Hu, Derrick Wing Kwan Ng, and Jinhong Yuan \\
          School of Electrical Engineering and Telecommunications, University of New South Wales, Sydney, Australia\vspace{-8mm}}}
\newtheorem{theorem}{Theorem}
\newtheorem{Proposition}{Proposition}
\begin{document}
\maketitle

\thispagestyle{empty}

\begin{abstract}
In this paper, the secure performance of multiuser multiple-input single-output wireless communications systems assisted by a multifunctional active intelligent reflection surface (IRS) is investigated. The active IRS can simultaneously reflect and amplify the incident signals and emit artificial noise to combat potential wiretapping. We minimize the total system power consumption by designing the phase, amplitude, and IRS mode selection of the active IRS elements, as well as the precoder and artificial noise vector of the base station (BS). The design is formulated as a non-convex optimization problem guaranteeing communication security.  To tackle the problem, this paper proposes an iterative alternating algorithm
to obtain an effective sub-optimal solution. The simulation results show that the proposed scheme offers superior secure performance over all the considered baseline schemes, especially when the number of eavesdropper antennas is more than that of the BS.
\end{abstract}
\large\normalsize
\section{Introduction}
With the rapid development of meta-materials and electromagnetic materials, passive intelligent reflecting surfaces (IRSs) have been advocated as a promising solution to enable future wireless communications as they provide a new degrees-of-freedom (DoF) to intelligently shape the propagation condition of wireless channels \cite{wu2019intelligent}.
However, the performance of passive IRS-assisted systems is constrained by the inherent ``double-fading'' attenuation. In particular, since the received signals propagate through the cascaded reflection link, they suffer severe attenuation from large-scale fading twice\cite{8936989}. To circumvent this challenge, active IRS has been proposed in \cite{zhang2021active,xu2021resource,9377648}.  Different from passive IRSs, active reflection-type amplifiers are integrated into each element of active IRSs that can be realized by off-the-shelf components such as current-inverting converters\cite{lonvcar2019ultrathin}. As a result, an active IRS can further amplify the incident signals with controllable amplitudes and phases to mitigate the impacts of large-scale fading. Nevertheless, one major drawback of active IRS is that its associated noise power, e.g. thermal noise, scales with the transmit power of the active IRS that may result in a system performance bottleneck\cite{zhang2021active}.

On the other hand, although IRSs can improve the received signal strength, it also increases the susceptibility to potential eavesdropping.
As a result, in IRS-assisted wireless communications, communication security is a critical concern \cite{9483903}. To realize secure communication, numerous works have adopted the physical layer security techniques, e.g. artificial noise (AN)-aided beamforming\cite{sun2018robust}. Conventionally, a properly designed AN signal is inserted into the transmitted signals by the base station (BS) deliberately to confuse the signal detection at potential eavesdroppers. The rationale behind the AN design is to effectively reduce the interference leakage to the legitimate users while degrading the intercepted signal quality at the potential eavesdroppers via optimizing the direction of AN.
As such, the BS is able to improve the secrecy performance by adaptively adjusting the spatial directions of AN and the information signals jointly via spatial beamforming\cite{sun2018robust}. Unfortunately, this technique can only guarantee communication security when the total number of antennas equipped at the transmitter is larger than the receive ones at the eavesdropper. For this reason, most of the existing works, e.g. \cite{9483903,zhou2021secure}, ideally assume that the number of eavesdropper antennas is less than that of the BS even though it may not hold in practice.  To overcome the aforementioned difficulties, this paper designs secure communication systems assisted by a multifunctional active IRS with the ability to emit AN to combat potential wiretapping. In particular, the originally detrimental thermal noise at the active IRS is exploited as beneficial AN for secrecy communication provisioning. Thus, effective secure communications can be obtained even when the number of eavesdroppers' antennas has a competitive edge over the BS.

In this paper,  we formulate the resource allocation algorithm design for active IRS-assisted multiuser multiple-input single-output (MISO) wireless systems with the presence of a multi-antenna eavesdropper to minimize the total system power consumption. In contrast to previously published works \cite{zhang2021active,xu2021resource,9377648}, this paper is the first attempt to propose a novel versatile active IRS to realize secure wireless communications. In particular, this work allows each active IRS element operating in either the reflection mode or the jamming mode to swing the balance between improving the signal quality of the users and communication security. By smartly optimizing the IRS  mode selection, phase and amplitude of active IRS elements, and the precoder and AN vector of the BS, the active IRS can adaptively serve as a separate jamming source against potential eavesdropping, while reflecting the impinging signals to the desired legitimate users.
To handle the challenging non-convex optimization problem, we propose an alternating optimization (AO)-based algorithm to obtain an effective sub-optimal solution. Simulation results show that the proposed scheme outperforms all the baseline schemes, especially when the number of eavesdroppers' antenna is larger than that of the BS.

\emph{Notations}: Lowercase letter $x$, boldface lowercase letter $\mathbf{x}$, and boldface uppercase letter $\mathbf{X}$ are used to denote scalars, vectors, and matrices, respectively. The space of $N \times M$ real, complex, and binary matrices  are denoted by $\mathbb{R}^{N \times M}$, $\mathbb{C}^{N \times M}$, and $\mathbb{B}^{N \times M}$, respectively.  An $N\times N$ Hermitian matrix is represented by $\mathbb{H}^N$. $|\cdot|$, $\|\cdot\|$, and  $\|\cdot\|_{\mathrm{F}}$ denote the modulus of a complex-valued scalar, an Euclidean norm of a vector, and Frobenius norm of a matrix, respectively.  $\lambda_{\max}(\cdot)$, $\mathrm{det}(\cdot)$, $(\cdot)^{\mathrm{T}}$, $(\cdot)^{\mathrm{H}}$, $(\cdot)^{*}$, $\mathbb{E}\{\cdot\}$, $\mathrm{Rank}(\cdot)$, and $\mathrm{Tr(\cdot)}$ represent the largest eigenvalue, determinant, transpose, conjugate transpose, conjugate, expectation, rank, and trace of a matrix, respectively. $[x]^+ = \max\{0,x\}$. Positive semi-definite of matrix $\mathbf{X}$ is  represented by $\mathbf{X}\succeq\mathbf{0}$. $\mathrm{diag(\mathbf{x})}$ or $\widetilde{\mathrm{diag}}(\mathbf{X})$  denote a diagonal matrix whose entry is vector $\mathbf{x}$ or identical to that of $\mathbf{X}$, respectively.
$\mathcal{CN}(\mu, \sigma^2)$ denotes the distribution of a circularly symmetric complex Gaussian (CSCG) random variable with mean $\mu$ and variance $\sigma^2$, where $\sim$ represents ``distributed as''. $\mathbf{I}_N$ stands for an $N\times N$ identity matrix. 
\section{System Model}\vspace{-1mm}
\begin{figure}[t] \vspace*{-2mm}
  \centering
  \includegraphics[width=2.5in]{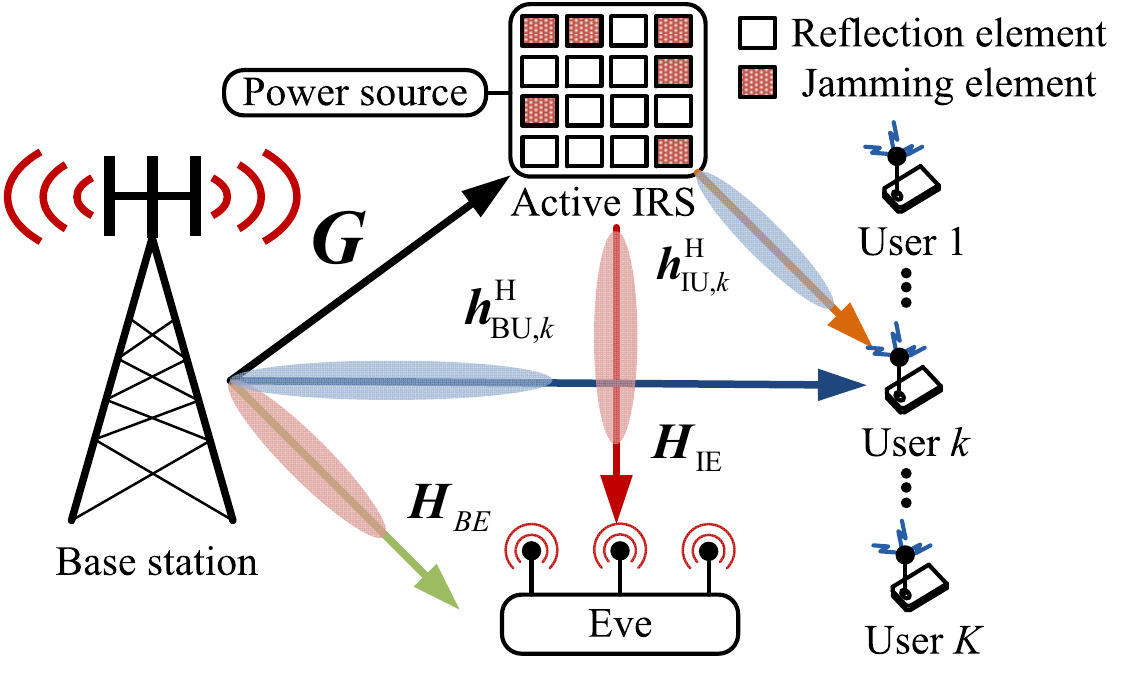}  \vspace*{-4mm}
  \caption{\hspace{-1mm}A secure wireless communication system with a multifunctional active IRS, a multi-antenna eavesdropper (Eve), and $K$ users.}
  \label{system_model} \vspace*{-7mm}
\end{figure}
The considered secure transmission system,  shown in Fig. \ref{system_model}, consists of four types of nodes, i.e., a base station (BS) with $N_{\mathrm{T}}$ antennas,  $K$ single-antenna users, an active IRS with $M$ elements, and an eavesdropper with $N_{\mathrm{E}}$ antennas\footnote{The consideration of an eavesdropper with  multiple antennas is equivalent to the case of cooperative multiple single-antenna eavesdroppers\cite{6470734}.}, where $ M+N_{\mathrm{T}}\geq N_{\mathrm{E}}$. Specifically, the BS intends to transmit confidential information to the $K$ users simultaneously. To enhance the communication security, both the BS and the active IRS are able to emit jamming signals to combat potential wiretapping.
A quasi-static flat fading channel model \cite{xu2021resource,9483903} is assumed and the channel state information (CSI) of all the channels is assumed to be perfectly known, which can be achieved by currently available channel estimation methods\footnote{To  obtain intuitive insights, we assume a perfect CSI model in this paper. The case of the imperfect CSI model will be considered in its journal version.} \cite{wang2020channel}. The baseband equivalent channels from the BS to the active IRS (BS-IRS), from the active IRS to  user $k$ (IRS-user$_k$), from the active IRS to the eavesdropper (IRS-Eve), from the BS to user $k$ (BS-user$_k$), and from the BS to the eavesdropper (BS-Eve) are denoted by   $\bm{\mathbf{G}}\in\mathbb{C}^{M \times N_{\mathrm{T}}}$, $\mathbf{h}_{\mathrm{IU},k}^{\mathrm{H}}\in\mathbb{C}^{1\times M}$, $\mathbf{H}_{\mathrm{IE}}\in\mathbb{C}^{N_{\mathrm{E}} \times M}$, $\mathbf{h}_{\mathrm{BU},k}^{\mathrm{H}}\in\mathbb{C}^{1\times N_{\mathrm{T}} }$, and $\mathbf{H}_{\mathrm{BE}}\in\mathbb{C}^{N_{\mathrm{E}} \times N_{\mathrm{T}}}$, respectively.
\subsection{Signal Model}
 \subsubsection{Transmitted Signal at the BS}
The transmitted signal from the BS is given by\vspace{-3mm}
\begin{align}
    \mathbf{x} = \sum_{k=1}^{K}\mathbf{w}_k x_k + \mathbf{z}_{\mathrm{B}},
    \label{tx_signal}\\[-8mm]\notag
\end{align}
where $\mathbf{w}_k\in\mathbb{C}^{N_{\mathrm{T}}\times1}, \forall k\in\{1,\ldots,K\}$, is the precoding vector for user $k$ and $x_k\sim \mathcal{CN}(0,1)$ is the data symbol intended for user $k$. The AN vector is represented by $\mathbf{z}_{\mathrm{B}}\in\mathbb{C}^{N_{\mathrm{t}}\times 1}$ which is generated by the BS to deliberately combat the eavesdropper. Specifically, $\mathbf{z}_{\mathrm{B}}$ is a CSCG vector with $\mathbf{z}_{\mathrm{B}}\sim\mathcal{CN}(\mathbf{0},\mathbf{Z}_{\mathrm{B}})$, where $\mathbf{Z}_{\mathrm{B}} = \mathbb{E}\{\mathbf{z}_{\mathrm{B}}\mathbf{z}_{\mathrm{B}}^{\mathrm{H}}\}$ is the covariance matrix of the AN vector. Notably, the AN is supposed to be unknown to both the legitimate users and the potential eavesdropper and the proposed algorithm will optimize the AN to facilitate communication security provisioning in the next section.
 \subsubsection{Reflected Signal at the IRS}\hspace{-1mm}
As shown in Fig. \ref{system_model},  some of the elements in the active IRS are selected for emitting jamming signals and the rest reflecting the impinging signals to the users. In particular, the reflection matrix of the IRS is denoted as $\mathbf{A}\bm{\Theta}$. Matrix $\bm{\Theta}= \mathrm{diag}(\bm{\phi}) \in \mathbb{C}^{M\times M}$, where $\bm{\phi} = [\phi_1,\ldots,\phi_m,\ldots,\phi_M]^{\mathrm{T}} \in \mathbb{C}^{M\times1}$ and $\phi_m= p_m e^{j\theta_m}$,  $\forall m \in [1,\ldots,M]$, is a diagonal matrix, which controls the phase shifts and amplitude gain introduced by the active IRS. Besides, $\theta_m \in [0,2\pi)$ and $p_m\geq 0$ are the phase shift and the amplitude coefficient for the $m$-th active IRS element, respectively. Thanks to the integrated active amplifier, $p_m$ can be lager than one as in contrast to its counterpart in passive IRSs, e.g. \cite{zhang2021active,xu2021resource,9377648}. Also, $\mathbf{A}= \mathrm{diag}(\bm{\alpha})\in \mathbb{B}^{M\times M}$ is a mode selection matrix that to be optimized, where $\bm{\alpha}=[\alpha_{1},\ldots,\alpha_{m},\ldots,\alpha_{M}]^{\mathrm{T}}\in \mathbb{B}^{M\times 1}$. Binary variable $\alpha_{m}\in\{0,1\}$, is an IRS mode selector, which is defined as:\vspace{-2mm}
\begin{align}
    \hspace{-3mm} \alpha_{m} \hspace{-1mm} = \hspace{-1mm} \left\{
    \begin{array}{ll}
      \hspace{-2mm}  1, &\hspace{-2mm} \text{Reflection mode at the active IRS element} \,m\text{,}\\
      \hspace{-2mm}  0, &\hspace{-2mm} \text{Jamming mode at the active IRS element} \,m\text{.}
    \end{array}
\right.\\[-7mm]\notag
\end{align}\begin{figure}[t] \vspace*{-3mm}
  \centering
  \includegraphics[width=2.7in]{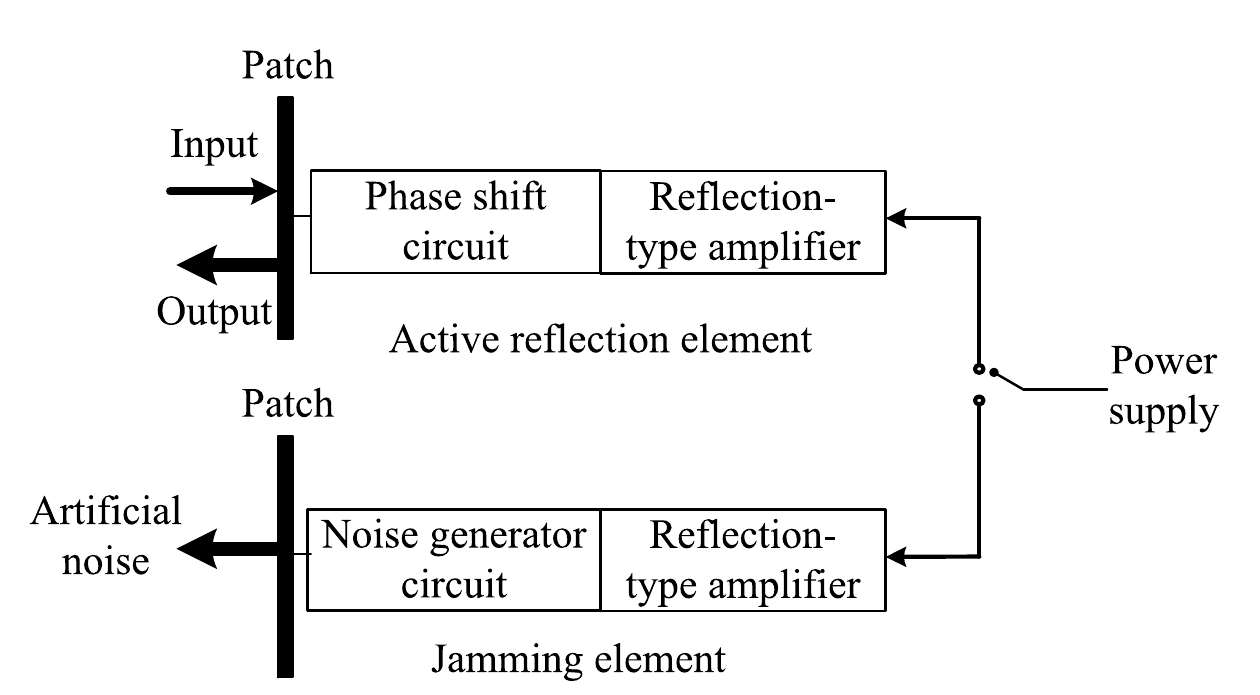}  \vspace*{-3mm}
  \caption{\hspace{-1mm}An architecture of the proposed active IRS\cite{zhang2021active,abdipour2008design}.}
  \label{element_circuits} \vspace*{-7mm}
\end{figure}
 As shown in Fig. \ref{element_circuits}, each of the active IRS elements can be set in either the reflection mode or the jamming mode. Specifically, active IRS elements in the reflection mode amplify and reflect all impinging signals, while the elements in the jamming mode generate a power-normalized AN and then amplify it to degrade the received signal quality of the eavesdropper.
 Moreover, the power budget of the active IRS is limited to $\|\bm{\Theta}\|^2_{\mathrm{F}}\leq P_{\mathrm{I}}^{\max}$, where $P_{\mathrm{I}}^{\max}$ is a constant.

Therefore, the overall signal  processing model at the proposed active IRS is given as \vspace{-2mm}
\begin{align}
\mathbf{y}_{\mathrm{I}} \hspace{-1mm} =  \hspace{-1mm}\underbrace{\mathbf{A}\mathbf{\Theta}\mathbf{G}\mathbf{x}}_{\text{Desired signal}}+\underbrace{(\mathbf{I}_{M}-\mathbf{A})\mathbf{\Theta}\mathbf{z}_{\mathrm{I}}}_{\text{AN}}+\underbrace{\bm{\Theta}\mathbf{n}_{\mathrm{I}}}_{\text{Dynamic noise}}+\underbrace{\cancel{\mathbf{n}_{\mathrm{s}}}}_{\text{Static noise}},\label{Eq:yIRS}\\[-7mm]\notag
\end{align}
where $\mathbf{z}_{\mathrm{I}} \in \mathbb{C}^{M\times1}$ denotes
the power-normalized AN  vector generated by the active IRS and $\mathbb{E}\{\mathbf{z}_{\mathrm{I}}\mathbf{z}_{\mathrm{I}}^{\mathrm{H}}\}=\mathbf{I}_{M}$.
As shown in \eqref{Eq:yIRS}, the power of the dynamic noise $\mathbf{\Theta}\mathbf{n}_{\mathrm{I}}$ is scaled with the power gain introduced by the active IRS, where $\mathbf{n}_{\mathrm{I}}$ is introduced by the input noise and inherent device noise. We model $\mathbf{n}_{\mathrm{I}}\in\mathbb{C}^{M\times1}$ as $\mathbf{n}_{\mathrm{I}}\sim \mathcal{CN}(\mathbf{0}_{M},\sigma_{\mathrm{I}}^2\mathbf{I}_M)$, where $\sigma_{\mathrm{I}}^2\geq 0$ is the corresponding power. Note that by optimizing $\mathbf{\Theta}$, the inherent noise generated by the active IRS, i.e., $\mathbf{\Theta}\mathbf{n}_{\mathrm{I}}$, can also serve as a jamming source to combat eavesdropping, rather than only being an interference to the system. On the other hand, different from the dynamic noise, the static noise $\mathbf{n}_{\mathrm{s}}$ is independent of amplification factors and it is usually negligible small compared with the dynamic noise\cite{bousquet20114}. As such, for simplicity, it is neglected in the sequel.
 \subsubsection{Signal Model of Users}
 The received signal at user $k$, $\forall k$, can be modeled as\vspace{-2mm}
 \begin{align}
    y_{\mathrm{U},k} = \mathbf{h}_{\mathrm{BU},k}^{\mathrm{H}}\mathbf{x} +  \mathbf{h}_{\mathrm{IU},k}^{\mathrm{H}}\mathbf{y}_{\mathrm{I}} +n_{\mathrm{U},k},\label{eq:y_U}\\[-7mm]\notag
\end{align}
where the thermal noise at user $k$ is represented by $n_{\mathrm{U},k}\sim \mathcal{CN}(0, \sigma^2_{\mathrm{U},k})$.
Therefore, the achievable rate   and the signal-to-interference-plus-noise ratio (SINR) at user $k$, $\forall k$, are \vspace{-0mm}
 \begin{align}
   R_{\mathrm{U},k} \hspace{-1mm}=&\log_2(1+\gamma_{\mathrm{U},k}) \text{ and}\label{eq:R_uk}\\[-1mm]
\gamma_{\mathrm{U},k} \hspace{-1mm}=& \frac{|\mathbf{h}_{\mathrm{eq},k}^{\mathrm{H}}\mathbf{w}_k|^2}
{\mu_k \hspace{-1mm}+\hspace{-1mm} |\mathbf{h}_{\mathrm{IU},k}^{\mathrm{H}}(\mathbf{I}_{M}\hspace{-1mm}-\hspace{-1mm}\mathbf{A})
\mathbf{\Theta}|^2\hspace{-1mm}+\hspace{-1mm}\sigma_{\mathrm{I}}^2\mathrm{Tr}(\mathbf{h}_{\mathrm{IU},k}^{\mathrm{H}}\bm{\Theta}\bm{\Theta}^{\mathrm{H}}\mathbf{h}_{\mathrm{IU},k})
},\notag\\[-7mm]\notag
\end{align}
respectively, where $\mu_k = \mathrm{Tr}(\mathbf{h}_{\mathrm{eq},k}^{\mathrm{H}}(\sum_{j\neq k}^{K}\mathbf{w}_j\mathbf{w}_j^{\mathrm{H}} + \mathbf{Z}_{\mathrm{B}})
\mathbf{h}_{\mathrm{eq},k})+\sigma^2_{\mathrm{U},k}$ and $\mathbf{h}_{\mathrm{eq},k}^{\mathrm{H}}=\mathbf{h}_{\mathrm{BU},k}^{\mathrm{H}}
+\mathbf{h}_{\mathrm{IU},k}^{\mathrm{H}}\mathbf{A}\mathbf{\Theta}\mathbf{G},\forall k$.

 \subsubsection{Signal Model of the Eavesdropper}
The received signal at the eavesdropper is given by\vspace{-2mm}
\begin{align}
    \mathbf{y}_{\mathrm{E}} =& \mathbf{H}_{\mathrm{BE}}\mathbf{x} +  \mathbf{H}_{\mathrm{IE}}\mathbf{y}_{\mathrm{I}}+\mathbf{n}_{\mathrm{E}}.\label{eq:y_E}\\[-7mm]\notag
\end{align}
Since the capability of the eavesdropper is unknown to the BS, for security provisioning, we assume that the eavesdropper has unlimited computational resources and is thus capable of cancelling all multiuser interference prior to decoding the information of a potential user. Besides, the thermal noise $\mathbf{n}_{\mathrm{E}}$ at the eavesdropper is ignored as it is also unknown to the BS. These worst-case assumptions lead to an upper bound on the channel capacity between the BS and the eavesdropper for decoding the signal of legitimate user $k$ \cite{zhou2021secure}, which is given\vspace{-2mm}
 \begin{align}
{C}_{\mathrm{E},k}\hspace{-1mm} =& \log_2\Big(\det\big(\mathbf{I}_{N_{\mathrm{E}}}+
     \mathbf{Q}^{-1}\mathbf{F}_{\mathrm{eq}}\mathbf{w}_k\mathbf{w}_k^{\mathrm{H}}\mathbf{F}_{\mathrm{eq}}^{\mathrm{H}}\big)\Big), \forall k,\label{eq:C_E}\\[-1mm]
     \mathbf{Q}\hspace{-1mm} =& \mathbf{F}_{\mathrm{eq}}\mathbf{Z}_{\mathrm{B}} \mathbf{F}_{\mathrm{eq}}^{\mathrm{H}} \hspace{-1mm}+\hspace{-1mm}\mathbf{H}_{\mathrm{IE}}(\mathbf{I}_{M}\hspace{-1mm}-\hspace{-1mm}\mathbf{A})\mathbf{\Theta}\mathbf{\Theta}^{\mathrm{H}}
     \mathbf{H}_{\mathrm{IE}}^{\mathrm{H}}\hspace{-1mm}+\hspace{-1mm}
     \sigma_{\mathrm{I}}^2\mathbf{H}_{\mathrm{IE}}\bm{\Theta}\bm{\Theta}^{\mathrm{H}}
     \mathbf{H}_{\mathrm{IE}}^{\mathrm{H}},\notag\\[-7mm]\notag
\end{align}
where $\mathbf{F}_{\mathrm{eq}}=\mathbf{H}_{\mathrm{BE}} +  \mathbf{H}_{\mathrm{IE}}\mathbf{A}\mathbf{\Theta}\mathbf{G}$.
 Hence, the achievable secrecy rate between  the BS and user $k$ is
 $R_{\mathrm{s},k} = [R_{\mathrm{U},k} - C_{\mathrm{E},k}]^+,\forall k$. 

\vspace{-1mm}
\section{Problem Formulation}\vspace{-1mm}

We aim to minimize the total system power consumption at the BS and the active IRS by jointly designing  $\mathbf{w}_k$, $\mathbf{Z}_{\mathrm{B}}$, $\alpha_m$, and $\mathbf{\Theta} $, which can be formulated as optimization problem $\mathcal{P}1$:\vspace{-2mm}
\begin{align}
\mathcal{P}1:\hspace{-3mm}&\underset{\mathbf{w}_k,\, \mathbf{Z}_{\mathrm{B}}\in\mathbb{H}^{N_{\mathrm{t}}},\,\alpha_m,\, \mathbf{\Theta}}{\mathrm{minimize}}\,\, \sum_{k=1}^{K}\|\mathbf{w}_k\|^2 +\mathrm{Tr}(\mathbf{Z}_{\mathrm{B}}) +\|\mathbf{\Theta}\|^2_{\mathrm{F}}  \label{proposed_formulation_origion} \\[-1mm]
&\mathrm{s.t.}\,\,\mathrm{C1}\hspace{-1mm}: \gamma_{\mathrm{U},k}  \geq \gamma_{k}^{\min}, \forall k, \notag\\[-1mm]
&\hspace{6mm}\mathrm{C2}\hspace{-1mm}: \log_2\Big(\det(\mathbf{I}_{N_{\mathrm{E}}}\!+\!
     \mathbf{Q}^{-1}\mathbf{F}_{\mathrm{eq}}\mathbf{w}_k\mathbf{w}_k^{\mathrm{H}}\mathbf{F}_{\mathrm{eq}}^{\mathrm{H}})\Big)\!\leq\! C_{k}^{\max}\!,\forall k, \notag\\[-1mm]
     &\hspace{6mm}\mathrm{C3}\hspace{-1mm}: \mathbf{Z}_{\mathrm{B}}\succeq\mathbf{\mathbf{0}},\hspace{0mm}\mathrm{C4}\hspace{-1mm}:  \alpha_{m} \in\{0,1\},\forall m,\mathrm{C5}\hspace{-1mm}: \|\mathbf{\Theta}\|^2\leq P_{\mathrm{I}}^{\max}.\notag\\[-7mm]\notag
\end{align}
Here, constraint $\mathrm{C1}$ ensures the minimum required SINR at user $k$ such that its achievable rate, i.e., $R_{\mathrm{U},k}$, exceeds $\log_2( 1+\gamma_{k}^{\min})$. In constraint $\mathrm{C2}$, the maximum tolerable information leakage to the eavesdropper for wiretapping the signals transmitted to user $k$ is limited to $C_{k}^{\max}$. By combining $\mathrm{C1}$ and $\mathrm{C2}$, the achievable secrecy rate between  the BS and user $k$ is bounded by below, i,e., $R_{\mathrm{s},k} \geq [R_{\mathrm{U},k} -C_{k}^{\max}]^{+}$, such that the minimum required  secrecy rate is guaranteed.
 $\mathbf{Z}_{\mathrm{B}}\in\mathbb{H}^{N_{\mathrm{t}}}$ and $\mathrm{C3}$ guarantees that  $\mathbf{Z}_{\mathrm{B}}$ satisfies the requirements of being a covariance matrix of the AN vector. Constraint $\mathrm{C4}$  requires that each active IRS element function in one of two modes: reflection mode or jamming mode. Constraint $\mathrm{C5}$ limits the power budget of the active IRS under $P_{\mathrm{I}}^{\max}$.
\section{Optimization Solution} \label{solution}
The formulated problem, $\mathcal{P}1$, is non-convex due to the binary variable $ \alpha_{m}$ in constraint $\mathrm{C4}$. Moreover, optimization variables are coupled in constraints $\mathrm{C1}$ and $\mathrm{C2}$ of the formulated problem $\mathcal{P}1$. Generally, obtaining the global optimization solution to $\mathcal{P}1$ requires an exhaustive search, which is computational expensive even for moderate system size. Therefore,  we present a computationally efficient iterative AO algorithm to obtain a sub-optimal solution to the problem $\mathcal{P}1$.
\subsection{Problem Transformation}\vspace{-0mm}
In this section, $\mathcal{P}1$ will be first rewritten as its equivalent form $\mathcal{P}2$, which paves the way to facilitate the development of the AO method. To start with, we introduce
the following proposition to handle the non-convexity of $\mathrm{C2}$.
\begin{Proposition}\label{Proposition_1} Since $\mathbf{Q}^{-\frac{1}{2}}\mathbf{F}_{\mathrm{eq}}\mathbf{w}_k\mathbf{w}_k^{\mathrm{H}}\mathbf{F}_{\mathrm{eq}}^{\mathrm{H}}\mathbf{Q}^{-\frac{1}{2}}$ is a rank-one matrix,
$\mathrm{C2}$ in $\mathcal{P}1$ can be equivalently transformed as\vspace{-2mm}
\begin{align}
\overline{\mathrm{C2}}\hspace{-1mm}:& C_{\mathrm{E},k}^{\mathrm{Tol}}\Big(\mathbf{F}_{\mathrm{eq}}\mathbf{Z}_{\mathrm{B}} \mathbf{F}_{\mathrm{eq}}^{\mathrm{H}} \hspace{-0mm}+\hspace{-0mm}\mathbf{H}_{\mathrm{IE}}(\mathbf{I}_{M}\hspace{-1mm}-\hspace{-1mm}\mathbf{A})\mathbf{\Theta}\mathbf{\Theta}^{\mathrm{H}}
     \mathbf{H}_{\mathrm{IE}}^{\mathrm{H}}\\[-2mm]
    & +\hspace{-0mm}
     \sigma_{\mathrm{I}}^2\mathbf{H}_{\mathrm{IE}}\bm{\Theta}\bm{\Theta}^{\mathrm{H}}
     \mathbf{H}_{\mathrm{IE}}^{\mathrm{H}}\Big) \hspace{-0mm}-\hspace{-0mm}\mathbf{F}_{\mathrm{eq}}\mathbf{w}_k\mathbf{w}_k^{\mathrm{H}}\mathbf{F}_{\mathrm{eq}}^{\mathrm{H}} \succeq\mathbf{0},\notag\\[-7mm]\notag
\end{align}\vspace{-2mm}
where $C_{\mathrm{E},k}^{\mathrm{Tol}}=2^{C_{k}^{\max}}-1$.
\end{Proposition}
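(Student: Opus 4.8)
The plan is to first collapse the $\log\det$ term in $\mathrm{C2}$ into a single scalar inequality by exploiting the rank-one structure highlighted in the statement, and then to recast that scalar inequality as the linear matrix inequality (LMI) $\overline{\mathrm{C2}}$ via a congruence transformation. Throughout I would use that $\mathbf{Q}\succ\mathbf{0}$, a point I return to at the end.

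First, writing $\mathbf{F}_{\mathrm{eq}}\mathbf{w}_k\mathbf{w}_k^{\mathrm{H}}\mathbf{F}_{\mathrm{eq}}^{\mathrm{H}}=(\mathbf{F}_{\mathrm{eq}}\mathbf{w}_k)(\mathbf{F}_{\mathrm{eq}}\mathbf{w}_k)^{\mathrm{H}}$ and invoking the matrix determinant lemma (equivalently, Sylvester's identity $\det(\mathbf{I}+\mathbf{AB})=\det(\mathbf{I}+\mathbf{BA})$ with $\mathbf{A}=\mathbf{Q}^{-1}\mathbf{F}_{\mathrm{eq}}\mathbf{w}_k$ and $\mathbf{B}=\mathbf{w}_k^{\mathrm{H}}\mathbf{F}_{\mathrm{eq}}^{\mathrm{H}}$), one gets
\begin{align}
\det\!\big(\mathbf{I}_{N_{\mathrm{E}}}+\mathbf{Q}^{-1}\mathbf{F}_{\mathrm{eq}}\mathbf{w}_k\mathbf{w}_k^{\mathrm{H}}\mathbf{F}_{\mathrm{eq}}^{\mathrm{H}}\big)=1+\mathbf{w}_k^{\mathrm{H}}\mathbf{F}_{\mathrm{eq}}^{\mathrm{H}}\mathbf{Q}^{-1}\mathbf{F}_{\mathrm{eq}}\mathbf{w}_k.\notag
\end{align}
As $\log_2(\cdot)$ is increasing, $\mathrm{C2}$ is thus equivalent to the scalar inequality $\mathbf{w}_k^{\mathrm{H}}\mathbf{F}_{\mathrm{eq}}^{\mathrm{H}}\mathbf{Q}^{-1}\mathbf{F}_{\mathrm{eq}}\mathbf{w}_k\le 2^{C_k^{\max}}-1=C_{\mathrm{E},k}^{\mathrm{Tol}}$.

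Next I would turn this scalar inequality into $\overline{\mathrm{C2}}$. Using the factorization $\mathbf{Q}=\mathbf{Q}^{1/2}\mathbf{Q}^{1/2}$ with $\mathbf{Q}^{1/2}\succ\mathbf{0}$, and noting that congruence by $\mathbf{Q}^{-1/2}$ preserves positive semidefiniteness,
\begin{align}
C_{\mathrm{E},k}^{\mathrm{Tol}}\mathbf{Q}-\mathbf{F}_{\mathrm{eq}}\mathbf{w}_k\mathbf{w}_k^{\mathrm{H}}\mathbf{F}_{\mathrm{eq}}^{\mathrm{H}}\succeq\mathbf{0}\iff C_{\mathrm{E},k}^{\mathrm{Tol}}\mathbf{I}_{N_{\mathrm{E}}}-\mathbf{u}_k\mathbf{u}_k^{\mathrm{H}}\succeq\mathbf{0},\notag
\end{align}
where $\mathbf{u}_k=\mathbf{Q}^{-1/2}\mathbf{F}_{\mathrm{eq}}\mathbf{w}_k$. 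Since $\mathbf{u}_k\mathbf{u}_k^{\mathrm{H}}$ is rank-one and positive semidefinite, its largest eigenvalue equals $\mathrm{Tr}(\mathbf{u}_k\mathbf{u}_k^{\mathrm{H}})=\|\mathbf{u}_k\|^2=\mathbf{w}_k^{\mathrm{H}}\mathbf{F}_{\mathrm{eq}}^{\mathrm{H}}\mathbf{Q}^{-1}\mathbf{F}_{\mathrm{eq}}\mathbf{w}_k$, so the right-hand LMI holds if and only if $C_{\mathrm{E},k}^{\mathrm{Tol}}\ge\mathbf{w}_k^{\mathrm{H}}\mathbf{F}_{\mathrm{eq}}^{\mathrm{H}}\mathbf{Q}^{-1}\mathbf{F}_{\mathrm{eq}}\mathbf{w}_k$. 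This coincides with the scalar inequality derived above, so chaining the two equivalences yields $\mathrm{C2}\Leftrightarrow\overline{\mathrm{C2}}$; substituting the three-term expression for $\mathbf{Q}$ from \eqref{eq:C_E} puts $\overline{\mathrm{C2}}$ in the displayed form.

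I expect the only delicate point to be the positive definiteness (in particular, invertibility) of $\mathbf{Q}$, on which both the matrix determinant lemma and the congruence step rely. I would dispatch it by observing that $\mathbf{Q}$ is a sum of the positive-semidefinite terms $\mathbf{F}_{\mathrm{eq}}\mathbf{Z}_{\mathrm{B}}\mathbf{F}_{\mathrm{eq}}^{\mathrm{H}}$, $\mathbf{H}_{\mathrm{IE}}(\mathbf{I}_M-\mathbf{A})\bm{\Theta}\bm{\Theta}^{\mathrm{H}}\mathbf{H}_{\mathrm{IE}}^{\mathrm{H}}$ and $\sigma_{\mathrm{I}}^2\mathbf{H}_{\mathrm{IE}}\bm{\Theta}\bm{\Theta}^{\mathrm{H}}\mathbf{H}_{\mathrm{IE}}^{\mathrm{H}}$, hence $\mathbf{Q}\succ\mathbf{0}$ whenever the data make $\mathbf{Q}$ nonsingular (as is implicitly assumed wherever $\mathbf{Q}^{-1}$ appears; e.g., $\sigma_{\mathrm{I}}^2>0$ with full-row-rank $\mathbf{H}_{\mathrm{IE}}\bm{\Theta}$ suffices). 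If one prefers to assume only $\mathbf{Q}\succeq\mathbf{0}$, the same conclusion follows by a standard perturbation argument: replace $\mathbf{Q}$ by $\mathbf{Q}+\epsilon\mathbf{I}_{N_{\mathrm{E}}}$, apply the above, and let $\epsilon\downarrow0$.
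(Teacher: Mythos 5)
Your proof is correct and follows essentially the same route as the paper's: Sylvester's determinant identity collapses the log-det to the scalar quadratic-form inequality, and the rank-one structure (trace equals largest eigenvalue) together with a congruence by $\mathbf{Q}^{-1/2}$ converts that scalar bound into the LMI $\overline{\mathrm{C2}}$. Your explicit treatment of the invertibility of $\mathbf{Q}$ is a worthwhile addition that the paper leaves implicit, but it does not change the argument.
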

\begin{proof}
Please refer to the appendix.
\end{proof}

Therefore, the $\mathcal{P}1$ can be rewritten as following:\vspace{-2mm}
\begin{align}
\mathcal{P}2:\underset{\mathbf{w}_k,\, \mathbf{Z}_{\mathrm{B}}\in\mathbb{H}^{N_{\mathrm{t}}},\,\alpha_m,\, \mathbf{\Theta}}{\mathrm{minimize}}\,\,& \sum_{k=1}^{K}\|\mathbf{w}_k\|^2 +\mathrm{Tr}(\mathbf{Z}_{\mathrm{B}}) +\|\mathbf{\Theta}\|^2_{\mathrm{F}}  \label{proposed_formulation_p2} \\[-1mm]
\mathrm{s.t.}\,\,&\mathrm{C1},\overline{\mathrm{C2}},\mathrm{C3}-\mathrm{C5}.\notag\\[-7mm \notag]
\end{align}
Since the variables in $\mathcal{P}2$ are still coupled, we divide $\mathcal{P}2$ into two sub-problems, which alternatingly solve $\{\mathbf{w}_k,\mathbf{Z}_{\mathrm{B}}\}$ and $\{\bm{\Theta},\alpha_m\}$ via fixing the other variables, respectively.

\subsection{Sub-problem 1: Optimization of Precoder and AN Vector at the BS}
Sub-problem 1 optimizes the  precoder vector, $\mathbf{w}_k$, and the AN covariance matrix,  $\mathbf{Z}_{\mathrm{B}}$, at the BS for a given phase shift matrix $\mathbf{\Theta}$ and mode selector $\alpha_m$. By defining $\mathbf{W}_k=\mathbf{w}_k\mathbf{w}_k^{\mathrm{H}}$, problem $\mathcal{P}2$ can be equivalently rewritten as:\vspace{-2mm}
\begin{align}
\mathcal{P}3:&\underset{\mathbf{W}_k\in\mathbb{H}^{N_{\mathrm{t}}},\, \mathbf{Z}_{\mathrm{B}}\in\mathbb{H}^{N_{\mathrm{t}}}}{\mathrm{minimize}}\,\, \sum_{k=1}^{K}\mathrm{Tr}(\mathbf{W}_k)+\mathrm{Tr}(\mathbf{Z}_{\mathrm{B}})   \label{proposed_formulation_p3} \\[-1mm]
&\mathrm{s.t.}\,\,
\overline{\mathrm{C1}}\hspace{-1mm}:\mathrm{Tr}(\mathbf{h}_{\mathrm{eq},k}^{\mathrm{H}}\mathbf{W}_k\mathbf{h}_{\mathrm{eq},k})\geq \gamma_{k}^{\min}\notag\\[-1mm]
&\hspace{5mm}\Big(\mathrm{Tr}(\mathbf{h}_{\mathrm{eq},k}^{\mathrm{H}}\mathbf{Z}_{\mathrm{B}}
\mathbf{h}_{\mathrm{eq},k})+\sum_{j\neq k}^{K}\mathrm{Tr}(\mathbf{h}_{\mathrm{eq},k}^{\mathrm{H}}\mathbf{W}_{j}
\mathbf{h}_{\mathrm{eq},k})\notag\\[-1mm]
&\hspace{6mm}+\mathrm{Tr}\big(\mathbf{h}_{\mathrm{IU},k}^{\mathrm{H}}(\mathbf{I}_{M}-\mathbf{A})
\mathbf{\Theta}\mathbf{\Theta}^{\mathrm{H}}\mathbf{h}_{\mathrm{IU},k}\big)\notag\\[-1mm]
&\hspace{6mm}+\sigma_{\mathrm{I}}^2\mathrm{Tr}(\mathbf{h}_{\mathrm{IU},k}^{\mathrm{H}}\bm{\Theta}\bm{\Theta}^{\mathrm{H}}\mathbf{h}_{\mathrm{IU},k})
+\sigma^2_{\mathrm{U},k}\Big), \forall k,\notag\\[-2mm]
&\overline{\mathrm{C2}}\hspace{-1mm}: C_{\mathrm{E},k}^{\mathrm{Tol}}\Big(\mathbf{F}_{\mathrm{eq}}\mathbf{Z}_{\mathrm{B}} \mathbf{F}_{\mathrm{eq}}^{\mathrm{H}} \hspace{-1mm}+\hspace{-1mm}\mathbf{H}_{\mathrm{IE}}(\mathbf{I}_{M}\hspace{-1mm}-\hspace{-1mm}\mathbf{A})\mathbf{\Theta}\mathbf{\Theta}^{\mathrm{H}}
     \mathbf{H}_{\mathrm{IE}}^{\mathrm{H}}\hspace{-1mm}\notag\\[-2mm]
&\hspace{6mm}+\hspace{-1mm}
     \sigma_{\mathrm{I}}^2\mathbf{H}_{\mathrm{IE}}\bm{\Theta}\bm{\Theta}^{\mathrm{H}}
     \mathbf{H}_{\mathrm{IE}}^{\mathrm{H}}\Big) \hspace{-1mm}-\hspace{-1mm}\mathbf{F}_{\mathrm{eq}}\mathbf{W}_k\mathbf{F}_{\mathrm{eq}}^{\mathrm{H}} \succeq\mathbf{0},\notag\\[-1mm]
&\mathrm{C3},\mathrm{C6}\hspace{-1mm}: \mathbf{W}_k \succeq \mathbf{0},\forall k,
{\mathrm{C7}\hspace{-1mm}: \mathrm{Rank}(\mathbf{W}_k )\leq 1,\forall k}.\notag\\[-7mm]\notag
\end{align}
Here, constraints $\mathrm{C6}$, $\mathrm{C7}$, and $\mathbf{W}_k\in\mathbb{H}^{N_{\mathrm{t}}}$ are imposed to guarantee that $\mathbf{W}_k=\mathbf{w}_k\mathbf{w}_k^{\mathrm{H}}$ still holds after optimizing $\mathbf{W}_k$. Note that rank-one constraint $\mathrm{C7}$ in \eqref{proposed_formulation_p3} is the only non-convex obstacle, which can be addressed by applying the semi-definite relaxation (SDR) technique \cite{9483903}, i.e., $\cancel{\mathrm{C7}\hspace{-1mm}: \mathrm{Rank}(\mathbf{W}_k)\leq 1,\forall k}$. Without constraint $\mathrm{C7}$, $\mathcal{P}3$ is a convex semidefinite programming that can be solved by some standard convex program solvers, e.g. CVX \cite{grant2014cvx} . In the following theorem, the tightness of the adopted SDR is studied.
\begin{theorem}
If $\mathcal{P}3$ is feasible, a rank-one solution of $\mathbf{W}_{k}$ in  \eqref{proposed_formulation_p3} can always be constructed.
\end{theorem}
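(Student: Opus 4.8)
The plan is to prove that the semi-definite relaxation is tight by invoking the Karush--Kuhn--Tucker (KKT) conditions of the relaxed problem, i.e., $\mathcal{P}3$ with $\mathrm{C7}$ removed. First I would observe that, once $\mathrm{C7}$ is dropped, $\mathcal{P}3$ is a convex semidefinite program; assuming it is feasible, a strictly feasible point can be exhibited (e.g.\ by slightly inflating $\mathrm{Tr}(\mathbf{Z}_{\mathrm{B}})$ and scaling the $\mathbf{W}_k$), so Slater's condition holds and the KKT conditions are necessary for optimality. I would then form the Lagrangian, introducing a scalar multiplier $\beta_k\ge 0$ for each $\overline{\mathrm{C1}}$, a positive semidefinite matrix multiplier $\mathbf{D}_k\succeq\mathbf{0}$ for each $\overline{\mathrm{C2}}$, and a positive semidefinite matrix multiplier $\mathbf{Y}_k\succeq\mathbf{0}$ for the constraint $\mathbf{W}_k\succeq\mathbf{0}$ in $\mathrm{C6}$ (the multipliers for $\mathrm{C3}$ and $\mathrm{C5}$ do not enter the analysis of $\mathbf{W}_k$).

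Next I would write out the stationarity condition $\nabla_{\mathbf{W}_k}\mathcal{L}=\mathbf{0}$, collecting every occurrence of $\mathbf{W}_k$: the objective contributes $\mathbf{I}_{N_{\mathrm{T}}}$; the desired-signal term of $\overline{\mathrm{C1}}$ for user $k$ contributes $-\beta_k\mathbf{h}_{\mathrm{eq},k}\mathbf{h}_{\mathrm{eq},k}^{\mathrm{H}}$; the multiuser-interference terms of $\overline{\mathrm{C1}}$ for the other users contribute $\sum_{j\neq k}\beta_j\gamma_j^{\min}\mathbf{h}_{\mathrm{eq},j}\mathbf{h}_{\mathrm{eq},j}^{\mathrm{H}}$; and $\overline{\mathrm{C2}}$ for user $k$ contributes $\mathbf{F}_{\mathrm{eq}}^{\mathrm{H}}\mathbf{D}_k\mathbf{F}_{\mathrm{eq}}$, so that
\begin{align}
\mathbf{Y}_k=\mathbf{I}_{N_{\mathrm{T}}}+\sum_{j\neq k}\beta_j\gamma_j^{\min}\mathbf{h}_{\mathrm{eq},j}\mathbf{h}_{\mathrm{eq},j}^{\mathrm{H}}+\mathbf{F}_{\mathrm{eq}}^{\mathrm{H}}\mathbf{D}_k\mathbf{F}_{\mathrm{eq}}-\beta_k\mathbf{h}_{\mathrm{eq},k}\mathbf{h}_{\mathrm{eq},k}^{\mathrm{H}}.\notag
\end{align}
Denoting the first three (positive semidefinite) terms on the right-hand side by $\mathbf{\Xi}_k$, we have $\mathbf{\Xi}_k\succeq\mathbf{I}_{N_{\mathrm{T}}}\succ\mathbf{0}$ because $\mathbf{D}_k\succeq\mathbf{0}$ forces $\mathbf{F}_{\mathrm{eq}}^{\mathrm{H}}\mathbf{D}_k\mathbf{F}_{\mathrm{eq}}\succeq\mathbf{0}$; hence $\mathbf{\Xi}_k$ has full rank $N_{\mathrm{T}}$, and since $\mathbf{Y}_k$ is obtained from $\mathbf{\Xi}_k$ by subtracting the rank-one term $\beta_k\mathbf{h}_{\mathrm{eq},k}\mathbf{h}_{\mathrm{eq},k}^{\mathrm{H}}$, we get $\mathrm{Rank}(\mathbf{Y}_k)\ge N_{\mathrm{T}}-1$.

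Finally, complementary slackness $\mathbf{Y}_k\mathbf{W}_k=\mathbf{0}$ forces the columns of $\mathbf{W}_k$ to lie in the null space of $\mathbf{Y}_k$, whose dimension is at most one, so $\mathrm{Rank}(\mathbf{W}_k)\le 1$. A feasible $\mathbf{W}_k$ cannot be the zero matrix, since the right-hand side of $\overline{\mathrm{C1}}$ is strictly positive; therefore $\mathrm{Rank}(\mathbf{W}_k)=1$ and the eigendecomposition $\mathbf{W}_k=\mathbf{w}_k\mathbf{w}_k^{\mathrm{H}}$ recovers the optimal precoder, establishing the tightness of the relaxation. I expect the main obstacle to be the handling of the matrix-valued multiplier $\mathbf{D}_k$ associated with the leakage constraint $\overline{\mathrm{C2}}$: one must carefully track which of the coupled occurrences of $\mathbf{W}_k$ across $\overline{\mathrm{C1}}$ and $\overline{\mathrm{C2}}$ enter the stationarity equation and, crucially, confirm that $\mathbf{F}_{\mathrm{eq}}^{\mathrm{H}}\mathbf{D}_k\mathbf{F}_{\mathrm{eq}}\succeq\mathbf{0}$ so that the identity term from the objective keeps $\mathbf{\Xi}_k$ strictly positive definite. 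Verifying Slater's condition for the relaxed SDP is a secondary but necessary technical point.
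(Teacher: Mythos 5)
Your proposal is correct and follows the same KKT-based route the paper sketches: stationarity with respect to $\mathbf{W}_k$ gives $\mathbf{Y}_k=\mathbf{\Xi}_k-\beta_k\mathbf{h}_{\mathrm{eq},k}\mathbf{h}_{\mathrm{eq},k}^{\mathrm{H}}$ with $\mathbf{\Xi}_k\succeq\mathbf{I}_{N_{\mathrm{T}}}\succ\mathbf{0}$, so complementary slackness forces $\mathrm{Rank}(\mathbf{W}_k)\leq 1$, and strict positivity of the right-hand side of $\overline{\mathrm{C1}}$ rules out $\mathbf{W}_k=\mathbf{0}$. The one point left informal in both your write-up and the paper's sketch is the verification of Slater's condition for the relaxed SDP (your suggested perturbation of $\mathbf{Z}_{\mathrm{B}}$ and $\mathbf{W}_k$ pushes $\overline{\mathrm{C1}}$ and $\overline{\mathrm{C2}}$ in opposite directions, so it needs more care), which you correctly flag as a secondary technical step.
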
\vspace{-2mm}
\begin{proof}
Due to the page limit, we provide only a sketch of proof. By exploiting the Karush-Kuhn-Tucker (KKT) conditions of $\mathcal{P}3$, one can prove that there always exists a rank-one solution of $\mathbf{W}_k$ to ensure a bounded solution of its optimal dual problem of $\mathcal{P}3$. Furthermore, by manipulating the dual variables of its dual problem, we can construct the rank-one solution of  $\mathcal{P}3$.
\end{proof}\vspace{-2mm}
\subsection{Sub-problem 2: Optimization of Phase Shifts and the Mode Selection Matrix at the Active IRS}\vspace{-2mm}
In this section, we optimize phase shift matrix $\mathbf{\Theta}$ and mode selector $\alpha_m$ at the active IRS for given $\mathbf{W}_k$ and $\mathbf{Z}_{\mathrm{B}}$. Firstly, to address the binary constraint $\mathrm{C4}$ in $\mathcal{P}2$, we equivalently transform $\mathrm{C4}$ as \vspace{-0mm}%
\begin{align}
\mathrm{C4a}\hspace{-1mm}:\alpha_m -\alpha_m ^2\leq 0,\forall m,\,\,\,\, \mathrm{C4b}\hspace{-1mm}: 0\leq \alpha_m \leq 1,\forall m.
\end{align}
On the other hand, to decouple the coupling variables $\mathbf{A}\mathbf{\Theta}$, we introduce a slack optimization variable $\mathbf{u}=[u_1,\ldots,u_m,\ldots,u_M]^{\mathrm{T}}\in\mathbb{C}^{M\times1}$ and  $\mathrm{diag}(\mathbf{u}^{\mathrm{H}})=\mathbf{A}\mathbf{\Theta}$.  By adopt the Big-M formulation\cite{9483903}, $\mathrm{diag}(\mathbf{u}^{\mathrm{H}})=\mathbf{A}\mathbf{\Theta}$ can be converted into a set of equivalent constraints as follows:\vspace{-2mm}
\begin{align}
\mathrm{C8a}\hspace{-1mm}:& \,\phi_{m}^{*}-(1-\alpha_m)\frac{P_{\mathrm{I}}^{\max}}{M}\leq{u}_m,\forall m,
\\[-1mm]
\mathrm{C8b}\hspace{-1mm}:& \,{u}_m\leq \phi_{m}^{*}+(1-\alpha_m)\frac{P_{\mathrm{I}}^{\max}}{M},\forall m,\notag\\[-1mm]
\mathrm{C8c}\hspace{-1mm}:& -\alpha_m\frac{P_{\mathrm{I}}^{\max}}{M}\leq {u}_m,\forall m,
\mathrm{C8d}\hspace{-1mm}:{u}_m\leq \alpha_m\frac{P_{\mathrm{I}}^{\max}}{M},\forall m, \notag
\end{align}
 where $u_m$ is the $m$-th element of $\mathbf{u}$. When $\alpha_m = 0$, optimization variable $u_m$ is forced to zero by constraints $\mathrm{C8d}$ and $\mathrm{C8c}$. When $\alpha_m = 1$, while $u_m$ attains the same value as $\phi_{m}^{*}$ by constraints $\mathrm{C8a}$ and $\mathrm{C8b}$.
Therefore, constraints $\mathrm{C1}$ and $\mathrm{C2}$ are rewritten as $\overline{\overline{\mathrm{C1}}}$  and $\overline{\overline{\mathrm{C2}}}$, respectively:\vspace{-2mm}
\begin{align}
\overline{\overline{\mathrm{C1}}}\hspace{-1mm}:&\mathrm{Tr}\Big(\mathbf{h}_{\mathrm{eq},k}\mathbf{h}_{\mathrm{eq},k}^{\mathrm{H}}\big(\mathbf{W}_k-\gamma_{k}^{\min}(\sum_{j\neq k}^{K}\mathbf{W}_j + \mathbf{Z}_{\mathrm{B}})\big)
\Big)\notag\\[-2mm]
&\geq
 \gamma_{{k}}^{\min}\Big(
\mathrm{Tr}\big(\mathbf{h}_{\mathrm{IU},k}^{\mathrm{H}}
\mathbf{\Theta}\mathbf{\Theta}^{\mathrm{H}}\mathbf{h}_{\mathrm{IU},k}\big)
\notag\\[-1mm]
&-\mathrm{Tr}\big(\mathrm{diag}(\mathbf{h}_{\mathrm{IU},k}^{\mathrm{H}})\mathrm{diag}(\mathbf{h}_{\mathrm{IU},k})\mathbf{u}\mathbf{u}^{\mathrm{H}}\big)\notag\\[-1mm]
&+\sigma_{\mathrm{I}}^2\mathrm{Tr}(\mathbf{h}_{\mathrm{IU},k}^{\mathrm{H}}\bm{\Theta}\bm{\Theta}^{\mathrm{H}}\mathbf{h}_{\mathrm{IU},k})
+\sigma^2_{\mathrm{U},k}\Big), \forall k,\\[-1mm]
\overline{\overline{\mathrm{C2}}}\hspace{-1mm}: & C_{\mathrm{E},k}^{\mathrm{Tol}}\Big(\hspace{-0mm}(\sigma_{\mathrm{I}}^2+1)\mathbf{H}_{\mathrm{IE}}\mathbf{\Theta}\mathbf{\Theta}^{\mathrm{H}}
     \mathbf{H}_{\mathrm{IE}}^{\mathrm{H}}-\hspace{-0mm}\mathbf{H}_{\mathrm{IE}}\widetilde{\mathrm{diag}}(\mathbf{u}\mathbf{u}^{\mathrm{H}})
     \mathbf{H}_{\mathrm{IE}}^{\mathrm{H}}\hspace{-0mm}
     \Big) \hspace{-0mm}\notag\\[-1mm]
     &+\hspace{-0mm}\mathbf{H}_{\mathrm{BE}}\mathbf{B}_{k}\mathbf{H}_{\mathrm{BE}}^{\mathrm{H}} +\mathbf{H}_{\mathrm{BE}}\mathbf{B}_k\mathbf{G}^{\mathrm{H}}\mathrm{diag}(\mathbf{u})\mathbf{H}_{\mathrm{IE}}^{\mathrm{H}}
     \notag\\[-1mm]
     &+\mathbf{H}_{\mathrm{IE}}\mathrm{diag}(\mathbf{u}^{\mathrm{H}})\mathbf{G}\mathbf{B}_k\mathbf{H}_{\mathrm{BE}}^{\mathrm{H}}
     \notag\\[-1mm]
     &+\mathbf{H}_{\mathrm{IE}}\mathrm{diag}(\mathbf{u}^{\mathrm{H}})\mathbf{G}\mathbf{B}_k\mathbf{G}^{\mathrm{H}}
     \mathrm{diag}(\mathbf{u})\mathbf{H}_{\mathrm{IE}}^{\mathrm{H}}\succeq\mathbf{0}, \forall k, \label{C2_bar}\\[-7mm]\notag
\end{align}
respectively, where
$\mathbf{h}_{\mathrm{eq},k}\mathbf{h}_{\mathrm{eq},k}^{\mathrm{H}}=\mathbf{h}_{\mathrm{BU},k}\mathbf{h}_{\mathrm{BU},k}^{\mathrm{H}}
+\mathbf{h}_{\mathrm{BU},k}\mathbf{u}^{\mathrm{H}}\mathrm{diag}(\mathbf{h}_{\mathrm{IU},k}^{\mathrm{H}})\mathbf{G}
+\mathbf{G}^{\mathrm{H}}\mathrm{diag}(\mathbf{h}_{\mathrm{IU},k})\mathbf{u}\mathbf{h}_{\mathrm{BU},k}^{\mathrm{H}}
+\mathbf{G}^{\mathrm{H}}\mathrm{diag}(\mathbf{h}_{\mathrm{IU},k})\mathbf{u}\mathbf{u}^{\mathrm{H}}\mathrm{diag}(\mathbf{h}_{\mathrm{IU},k}^{\mathrm{H}})\mathbf{G}$, and
$\mathbf{B}_k= C_{k}^{\mathrm{Tol}}\mathbf{Z}_{\mathrm{B}}-\mathbf{W}_k.$ 
Now, we further simplify the last term of constraint $\overline{\overline{\mathrm{C2}}}$ in \eqref{C2_bar}.
 By performing the singular value decomposition (SVD),  $\mathbf{G}\mathbf{B}_k\mathbf{G}^{\mathrm{H}}=\bm{\Upsilon}_k\bm{\sum}_k\mathbf{V}_k^{\mathrm{H}}$, where $\bm{\Upsilon}_k=[\mathbf{r}_{k,1},\ldots,\mathbf{r}_{k,M}]$ and $\mathbf{V}_k=[\mathbf{v}_{k,1},\ldots,\mathbf{v}_{k,M}]$ are $M\times M$ complex unitary matrices, $\mathbf{r}_{k,m}\in\mathbb{C}^{M\times 1}$ and $\mathbf{v}_{k,m}\in\mathbb{C}^{M\times 1},\forall m$, are the orthonormal bases of $\bm{\Upsilon}_k$ and $\mathbf{V}_k$, respectively. $\bm{\sum}_k\in\mathbb{R}^{M\times M}$ is a diagonal matrix with entries $[\varrho_{k,1},\ldots,\varrho_{k,m}, \ldots,\varrho_{k,M}]$. Thus, SVD can be written as $\mathbf{G}\mathbf{B}_k\mathbf{G}^{\mathrm{H}}=\sum_{m=1}^{M}\mathbf{p}_{k,m}\mathbf{q}_{k,m}^{\mathrm{H}}$, where $\mathbf{p}_{k,m}=\varrho_{k,m}\mathbf{r}_{k,m}$ and $\mathbf{q}_{k,m}=\mathbf{v}_{k,m}, \forall m$.
 As such, the last term of constraint $\overline{\overline{\mathrm{C2}}}$ in \eqref{C2_bar} can be recast as\vspace{-1mm}
\begin{align}
&\mathbf{H}_{\mathrm{IE}}\mathrm{diag}(\mathbf{u}^{\mathrm{H}})\mathbf{G}\mathbf{B}_k\mathbf{G}^{\mathrm{H}}
     \mathrm{diag}(\mathbf{u})\mathbf{H}_{\mathrm{IE}}^{\mathrm{H}}\notag\\[-1mm]
        =&\sum_{m=1}^{M}\mathbf{H}_{\mathrm{IE}}\mathrm{diag}(\mathbf{p}_{k,m})\mathbf{u}\mathbf{u}^{\mathrm{H}}
     \mathrm{diag}(\mathbf{q}_{k,m}^{\mathrm{H}})\mathbf{H}_{\mathrm{IE}}^{\mathrm{H}}.\\[-7mm]\notag
\end{align}
Then, we tackle the remaining coupling matrices $\mathbf{\Theta}\mathbf{\Theta}^{\mathrm{H}}$ and $\mathbf{u}\mathbf{u}^{\mathrm{H}}$ in constraints $\overline{\overline{\mathrm{C1}}}$ and $\overline{\overline{\mathrm{C2}}}$. Firstly, we introduce slack optimization variables $\mathbf{\Phi}$ and $\mathbf{U}$, where\vspace{-1mm}
\begin{align}
\mathbf{\Phi}=\mathbf{\Theta}\mathbf{\Theta}^{\mathrm{H}}\text{ and }
\mathbf{U}=\mathbf{u}\mathbf{u}^{\mathrm{H}}.\label{eq:PhiTheta}\\[-7mm]\notag
\end{align}
  The two equalities in \eqref{eq:PhiTheta} can be equivalent written as the following constraints\vspace{-0mm}
\begin{align}
\mathrm{C9a}\hspace{-1mm}:& \begin{bmatrix}
\mathbf{\Phi}&\mathbf{\Theta}\\
\mathbf{\Theta}^{\mathrm{H}}&\mathbf{I}
\end{bmatrix}\succeq\mathbf{0},\hspace{8mm}
\mathrm{C9b}\hspace{-1mm}: \mathrm{Tr}(\mathbf{\Phi}-\mathbf{\Theta}^{\mathrm{H}}\mathbf{\Theta})\leq0,\\[-0mm]
\mathrm{C10a}\hspace{-1mm}:& \begin{bmatrix}
\mathbf{U}&\mathbf{u}\\
\mathbf{u}^{\mathrm{H}}&1
\end{bmatrix}\succeq\mathbf{0},\hspace{8mm}
\mathrm{C10b}\hspace{-1mm}: \mathrm{Tr}(\mathbf{U}-\mathbf{u}\mathbf{u}^{\mathrm{H}})\leq0.\notag\\[-7mm]\notag
\end{align}
Therefore, problem $\mathcal{P}3$ can be equivalently rewritten as\vspace{-1mm}
\begin{align}
\mathcal{P}4:&\underset{\mathbf{U}\in\mathbb{H}^{M},\mathbf{u},\mathbf{\Phi}\in\mathbb{H}^{M},\alpha_m,\,\mathbf{\Theta}}{\mathrm{minimize}}\,\,\mathrm{Tr}(\mathbf{\Phi}) \label{proposed_formulation_p4} \\[-1mm]
\mathrm{s.t.}\,\,&\mathrm{C4a},\mathrm{C4b},\mathrm{C8a}-\mathrm{C8d},\mathrm{C9a},
\mathrm{C9b},\mathrm{C10a},\mathrm{C10b}\notag\\[-1mm]
\overline{\overline{\mathrm{C1}}}\hspace{-1mm}:&\mathrm{Tr}\Big(\mathbf{h}_{\mathrm{eq},k}\mathbf{h}_{\mathrm{eq},k}^{\mathrm{H}}\big(\mathbf{W}_k-\gamma_{k}^{\min}(\sum_{j\neq k}^{K}\mathbf{W}_j + \mathbf{Z}_{\mathrm{B}})\big)
\Big)\notag
\\[-1mm]
&\geq
 \gamma_{k}^{\min}\Big(
\mathrm{Tr}\big(\mathbf{h}_{\mathrm{IU},k}^{\mathrm{H}}
\mathbf{\Phi}\mathbf{h}_{\mathrm{IU},k}\big)
-\mathrm{Tr}\big(\widetilde{\mathrm{diag}}(\mathbf{h}_{\mathrm{IU},k}^{\mathrm{H}}\mathbf{h}_{\mathrm{IU},k})\mathbf{U}\big)
\notag
\\&+\sigma_{\mathrm{I}}^2\mathrm{Tr}(\mathbf{h}_{\mathrm{IU},k}^{\mathrm{H}}\mathbf{\Phi}\mathbf{h}_{\mathrm{IU},k})
+\sigma^2_{\mathrm{U},k}\Big), \forall k,\notag\\[-0mm]
\overline{\overline{\mathrm{C2}}}\hspace{-1mm}: & C_{\mathrm{E},k}^{\mathrm{Tol}}\Big(\hspace{-0mm}(\sigma_{\mathrm{I}}^2+1)\mathbf{H}_{\mathrm{IE}}\mathbf{\Phi}
     \mathbf{H}_{\mathrm{IE}}^{\mathrm{H}}\hspace{-0mm}-\hspace{-0mm}\mathbf{H}_{\mathrm{IE}}\widetilde{\mathrm{diag}}(\mathbf{U})
     \mathbf{H}_{\mathrm{IE}}^{\mathrm{H}}\hspace{-0mm}
     \Big) \notag\\[-1mm]
     & \hspace{-0mm}+\hspace{-0mm}\mathbf{H}_{\mathrm{BE}}\mathbf{B}_{k}\mathbf{H}_{\mathrm{BE}}^{\mathrm{H}} +\mathbf{H}_{\mathrm{BE}}\mathbf{B}_k\mathbf{G}^{\mathrm{H}}\mathrm{diag}(\mathbf{u})\mathbf{H}_{\mathrm{IE}}^{\mathrm{H}}
     \notag\\[-0mm]
     &+\mathbf{H}_{\mathrm{IE}}\mathrm{diag}(\mathbf{u}^{\mathrm{H}})\mathbf{G}\mathbf{B}_k\mathbf{H}_{\mathrm{BE}}^{\mathrm{H}}
     \notag\\[-1mm]
     &+\sum_{m=1}^{M}\mathbf{H}_{\mathrm{IE}}\mathrm{diag}(\mathbf{p}_{k,m})\mathbf{U}
     \mathrm{diag}(\mathbf{q}_{k,m}^{\mathrm{H}})\mathbf{H}_{\mathrm{IE}}^{\mathrm{H}}\succeq\mathbf{0}.\notag
\notag\\[-0mm]
\mathrm{C5}\hspace{-1mm}:&\mathrm{Tr}(\mathbf{\Phi})\leq P_{\mathrm{I}}^{\max},\notag\\[-7mm]\notag
\end{align}
where $\mathbf{h}_{\mathrm{eq},k}\mathbf{h}_{\mathrm{eq},k}^{\mathrm{H}} = \mathbf{h}_{\mathrm{BU},k}\mathbf{h}_{\mathrm{BU},k}^{\mathrm{H}}+
\mathbf{h}_{\mathrm{BU},k}\mathbf{u}^{\mathrm{H}}\mathrm{diag}(\mathbf{h}_{\mathrm{IU},k})^{\mathrm{H}}\mathbf{G}
+\mathbf{G}^{\mathrm{H}}\mathrm{diag}(\mathbf{h}_{\mathrm{IU},k})\mathbf{u}\mathbf{h}_{\mathrm{BU},k}^{\mathrm{H}}
 +\mathbf{G}^{\mathrm{H}}\mathrm{diag}(\mathbf{h}_{\mathrm{IU},u})\mathbf{U}\mathrm{diag}(\mathbf{h}_{\mathrm{IU},k}^{\mathrm{H}})\mathbf{G} $.
\begin{table}[t] \vspace*{-6mm}
\scriptsize
\linespread{1.05}
\begin{algorithm} [H]
\caption{\small  SCA-based Iterative Active IRS Optimization} \label{alg_1}
\begin{algorithmic} [1]
\STATE Set the maximum iterations number $t_{\max}$, the index of the first iteration $t=0$, and optimization variables in $\mathbf{\Theta}^{(t)}$, $\mathbf{u}^{(t)}$, and $\alpha_m^{(t)},\forall m$.
\REPEAT[Main Loop: SCA]
\STATE Solve problem $\mathcal{P}5$ in \eqref{proposed_formulation_p5} with given optimization variables in $\mathbf{\Theta}^{(t)}$, $\mathbf{u}^{(t)}$, and $\alpha_m^{(t)}$, to obtain the variables for $\mathbf{\Theta}^{(t+1)}$, $\mathbf{u}^{(t+1)}$, and $\alpha_m^{(t+1)}$;
\STATE Set $t=t+1$ and update $\mathbf{\Theta}^{(t)}$, $\mathbf{u}^{(t)}$, and $\alpha_m^{(t)}$;
\UNTIL{convergence or $t=t_{\max}$}.\vspace*{-1mm}
\end{algorithmic}
\end{algorithm}
\vspace*{-10mm}
\end{table}
Now, the only non-convex constraints of problem $\mathcal{P}4$ are $\mathrm{C4a}$, $\mathrm{C9b}$, and $\mathrm{C10b}$, which are all in the form of difference of convex (d.c.) \cite{grant2014cvx,polik2010interior}, and therefore, we adopt the iterative successive convex approximation (SCA) algorithm to tackle their non-convexities. In particular, for any feasible points $\alpha_m^{(t)}$, $\mathbf{\Theta}^{(t)}$, and $\mathbf{u}^{(t)}$, where  $(t)$ is the iteration index for \textbf{Algorithm \ref{alg_1}}, the lower bounds of $\alpha_m ^2$, $\mathrm{Tr}(\mathbf{\Theta}^{\mathrm{H}}\mathbf{\Theta})$, and $\mathrm{Tr}(\mathbf{u}\mathbf{u}^{\mathrm{H}})$ can be derived by their
first-order Taylor expansions as\vspace{-2mm}
\begin{align}
\alpha_m ^2&\geq (\alpha_m ^{(t)})^2+2\alpha_m^{(t)}(\alpha_m -\alpha_m ^{(t)}),\forall m,\notag\\[-1mm]
\mathrm{Tr}(\mathbf{\Theta}^{\mathrm{H}}\mathbf{\Theta})&\geq -\|\mathbf{\Theta}^{(t)}\|_{\mathrm{F}}+2\mathrm{Tr}\Big((\mathbf{\Theta}^{(t)})^{\mathrm{H}}\mathbf{\Theta}\Big),\text{ and }\label{Eq:Taylor}\\[-2mm]
\mathrm{Tr}(\mathbf{u}\mathbf{u}^{\mathrm{H}})&\geq -\|\mathbf{u}^{(t)}\|+2\mathrm{Tr}\Big((\mathbf{u}^{(t)})^{\mathrm{H}}\mathbf{u}\Big),\notag\\[-7mm]\notag
\end{align}
respectively. Thus, by substituting the corresponding lower bounds in \eqref{Eq:Taylor} into $\mathrm{C4a}$, $\mathrm{C9b}$, and $\mathrm{C10b}$ in $\mathcal{P}4$, the subset of these constraints can be obtained as:\vspace{-2mm}
\begin{align}
\overline{\mathrm{C4a}}\hspace{-1mm}:&\alpha_m \leq (\alpha_m ^{(t)})^2+2\alpha_m ^{(t)}(\alpha_m -\alpha_m ^{(t)}),\forall m,\\[-1mm]
\overline{\mathrm{C9b}}\hspace{-1mm}: & \mathrm{Tr}(\mathbf{\Phi})\leq-\|\mathbf{\Theta}^{(t)}\|_{\mathrm{F}}+2\mathrm{Tr}\Big((\mathbf{\Theta}^{(t)})^{\mathrm{H}}\mathbf{\Theta}\Big),\notag\\[-1mm]
\overline{\mathrm{C10b}}\hspace{-1mm}:& \mathrm{Tr}(\mathbf{U})\leq-\|\mathbf{u}^{(t)}\|+2\mathrm{Tr}\Big((\mathbf{u}^{(t)})^{\mathrm{H}}\mathbf{u}\Big).\notag\\[-7mm]\notag
\end{align}
As such, the following optimization problem can yield a suboptimal solution of $\mathcal{P}4$.\vspace{-2mm}
\begin{align}
\mathcal{P}5:&\underset{\mathbf{U}\in\mathbb{H}^{M},\mathbf{u},\mathbf{\Phi}\in\mathbb{H}^{M},\alpha_m,\,\phi_m,\,\mathbf{\Theta}}{\mathrm{minimize}}\,\,\mathrm{Tr}(\mathbf{\Phi}) \label{proposed_formulation_p5} \\
\mathrm{s.t.}\,\,&\overline{\mathrm{C1}},\overline{\mathrm{C2}},\overline{\mathrm{C4a}},\!\mathrm{C4b},\mathrm{C5},\!\mathrm{C8a}\!-\!\mathrm{C8d},\!\mathrm{C9a},
\overline{\mathrm{C9b}},\mathrm{C10a},\overline{\mathrm{C10b}}.\notag\\[-7mm]\notag
\end{align}
\begin{table}[t] \vspace*{-2mm}
\scriptsize
\linespread{1.05}
\begin{algorithm} [H]
\caption{\small Overall Alternating Optimization Algorithm} \label{alg_overall}
\begin{algorithmic} [1]
\STATE Set the maximum iterations number $\tau_{\max}$, the index of the
first iteration $\tau=0$, and the optimization variables in $\mathbf{\Theta}^{(\tau)}$ and $\alpha_m^{(\tau)},\forall m$.
\REPEAT[Main Loop: Alternating Optimization]
\STATE Obtain $\mathbf{W}_k^{(\tau+1)}$ and $\mathbf{Z}_{\mathrm{B}}^{(\tau+1)}$ by solving problem \eqref{proposed_formulation_p3} with given optimization variables in $\mathbf{\Theta}^{(\tau)}$ and $\alpha_m^{(\tau)}$.
\STATE Obtain the variables $\mathbf{\Theta}^{(\tau+1)}$, $\mathbf{u}^{(\tau+1)}$, and $\alpha_m^{(\tau+1)}$ by \textbf{Algorithm \ref{alg_1}} with given $\mathbf{W}_k^{(\tau+1)}$ and $\mathbf{Z}_{\mathrm{B}}^{(\tau+1)}$;
\STATE Update $\mathbf{\Theta}^{(\tau)}=\mathbf{\Theta}^{(\tau+1)}$ and $\alpha_m^{(\tau)}=\alpha_m^{(\tau+1)}$.
\STATE Set $\tau=\tau+1$ and update the optimization variables;
\UNTIL{convergence or $\tau=\tau_{\max}$}.\vspace*{-1mm}
\end{algorithmic}
\end{algorithm}
\vspace*{-11mm}
\end{table}
This optimization problem is jointly convex with respect to $\mathbf{U},\mathbf{u},\mathbf{\Phi},\alpha_m$, and $\mathbf{\Theta}$, which can be solved by standard convex program solvers, e.g. CVX\cite{grant2014cvx}. The proposed algorithm for solving problem $\mathcal{P}5$ is shown in \textbf{Algorithm \ref{alg_1}}.
Moreover, \textbf{Algorithm \ref{alg_overall}} details the proposed overall AO algorithm, which solves  sub-problems $\mathcal{P}3$ and $\mathcal{P}5$ iteratively. With polynomial time computational complexity, \textbf{Algorithm \ref{alg_overall}}   is guaranteed to  converge to a suboptimal solution of  $\mathcal{P}1$ \cite{polik2010interior}.

\section{Results and Discussion}
This section examines the performance of the proposed scheme via simulations. The number of users is set to $K=2$. The location of the BS, the active IRS, users, and the eavesdropper are set in $(0,0)$, $(60, 20)$, $\{(100,10),(100,-10)\}$, and $(80,20)$ in a Cartesian coordinate system. We adopt the distance-dependent path loss model in \cite{wu2019intelligent} and the reference distance is set to $1$ m.
Unless otherwise indicated, other important parameters are listed as follows. The path loss exponents of the BS-user$_k$, BS-Eve, BS-IRS, IRS-user$_k$, and IRS-Eve links are set as $\eta_{\mathrm{BU,}k} = 3.5, \eta_{\mathrm{BE}} = 3.5$, $\eta_{\mathrm{BI}} = 2.6, \eta_{\mathrm{IU},k}=2.6$, and  $\eta_{\mathrm{IE}}=2.6$, respectively. Rician factors of the the BS-user$_k$, BS-Eve, BS-IRS, IRS-user$_k$, and IRS-Eve links are set as $\beta_{\mathrm{BU,}k} =0, \beta_{\mathrm{BE}} = 0$, $\beta_{\mathrm{BI}} =3,  \beta_{\mathrm{IU},k}= 3$, and $\beta_{\mathrm{IE}}=3$, respectively. The centre carrier frequency is $2.4$ GHz. The noise power at user $k$ and the active IRS are $\sigma^2_{\mathrm{U},k} = $ and $\sigma^2_{\mathrm{I}} = -100$ dBm. The number of the active IRS phase shifters, the BS's antennas, and the eavesdropper's antennas are $M=60$, $N_{\mathrm{T}} = 4$, and $N_{\mathrm{E}} = 2$, respectively. Without loss generality, we set the minimum requirement of SINR at all the users as the same value, i.e., $\gamma^{\min}_{k} = \gamma^{\min},\forall k$. The  maximum tolerable information leakage is set as $C_{k}^{\max} = 1.6$ bit/s/Hz for all users. The power consumption of the active IRS, $P^{\max}_{\mathrm{I}}=10$ dBm.
Moreover, we introduce two baseline schemes for system performance comparison: 1) Baseline scheme 1 is the same as the proposed scheme except that all the active IRS phase shifts are set to the reflection mode without emitting any AN; 2) Baseline scheme 2 does not deploy an IRS.

\begin{figure}[t]\vspace*{-2mm}
  \centering
  \includegraphics[width=3.3in]{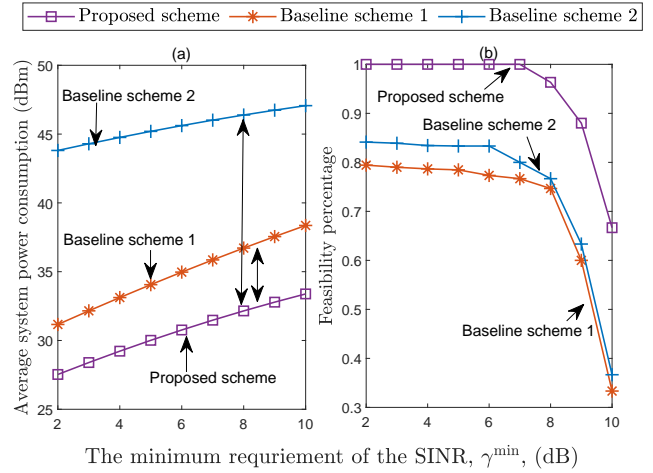}  \vspace*{-3mm}
  \caption{\hspace{-1mm}Average system power consumption versus the minimum requirement of SINR at users $ \gamma^{\min}$ with $N_{\mathrm{E}}=2$ and $M=60$.}
  \label{diffRmin} \vspace*{-8mm}
\end{figure}

Fig. \ref{diffRmin} (a) demonstrates the average system power consumption versus the minimum requirement of SINR at users, $ \gamma^{\min}$, for different schemes. In particular, Fig. \ref{diffRmin} (a) only records the results of the realizations that all the considered schemes have obtained a feasible solution to \eqref{proposed_formulation_origion}. For a fair comparison, we also show the feasibility percentage for solving \eqref{proposed_formulation_origion} versus  $ \gamma^{\min}$ for all the schemes in Fig. \ref{diffRmin} (b). As shown in Fig. \ref{diffRmin} (a), the average system power consumption increases with $\gamma^{\min}$ for all the considered schemes, since a more powerful beamforming is needed to satisfy the more stringent requirements of SINR.
Moreover, as active IRS is not available for baseline scheme 2, it is less capable of shaping the beamformer and the AN to mitigate the potential information leakage. This leads to higher power consumption to satisfy the quality of service (QoS) requirement compared with baseline scheme 1 and the proposed scheme.
In comparison to baseline scheme 2, the proposed scheme enjoys a substantial power reduction. This is because the active IRS elements in the proposed scheme can be optimized to reflect the incident signals or emit AN adapting to the channel conditions, which provides an additional DoF to combat the potential wiretapping efficiently. Note that the highest flexibility of the proposed scheme in exploiting the DoF also leads to the highest feasibility percentage over other considered baseline schemes, as shown in Fig. \ref{diffRmin} (b).

\begin{figure}[t]
  \centering
  \includegraphics[width=2.8in]{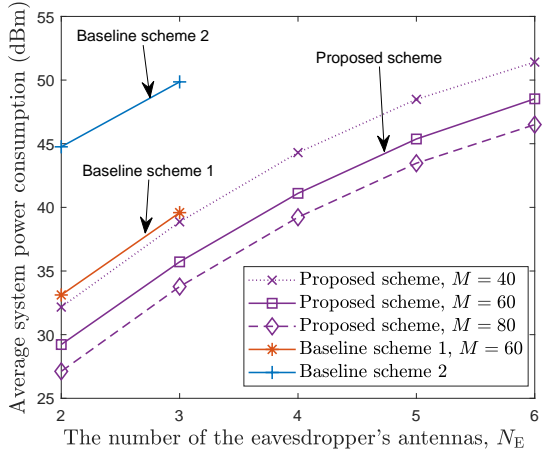}  \vspace*{-3mm}
  \caption{\hspace{-1mm}Average system power consumption versus the number of the eavesdropper's antennas with $  \gamma^{\min}=4$ dB. Baseline schemes 1 and 2 are infeasible in all the channel realizations for $N_{\mathrm{T}}\leq N_{\mathrm{E}}$, while that for the proposed scheme are feasible.}
  \label{diffN} \vspace*{-6mm}
\end{figure}

Fig. \ref{diffN} demonstrates the average system power consumption versus the number of the eavesdropper's antennas for cases with a different number of elements at the active IRS.
Since there is insufficient DoF and flexibility for optimizing the system resources, baseline schemes 1 and 2 are infeasible in all channel realizations when $N_{\mathrm{E}} \geq N_{\mathrm{T}}$. In contrast, the proposed scheme admits a feasible solution in the same setting as the former. In particular, as  $N_{\mathrm{E}}$ increases, the eavesdropper enjoys more spatial DoF for more effective wiretapping. As such, a higher power is allocated to AN at both the BS and the active IRS to neutralize the decoding capability of the eavesdropper and thus the total power consumption increases in the proposed scheme.
Furthermore, the average system power consumption can be reduced by increasing the number of active IRS elements. Indeed, the extra spatial DoF offered by the additional IRS elements provides higher flexibility to enhance the signal quality between the BS and the users while effectively mitigating the information leakage to the eavesdropper. 

\vspace{-1mm}
\section{Conclusions}\vspace{-1mm}
We investigated a multi-user MISO downlink communications system with an active IRS in the presence of a multi-antenna eavesdropper. Specifically, both the BS and the active IRS are able to emit the AN for security provisioning. A power minimization design was formulated as a non-convex optimization problem while guaranteeing the secrecy rate between the BS and the users. Our design optimized the precoder and the AN vector at the BS and the amplitude, phase, and mode selection of each active IRS element. A suboptimal solution to the design problem was obtained by applying AO, SCA, and SDR techniques. Simulation results unveiled that the proposed scheme can provide a significant power consumption reduction than the other considered baseline schemes to guarantee the same QoSs. Furthermore, even when the number of eavesdropper antennas exceeds that of the BS, the proposed scheme can still provide secure communication.\vspace{-1mm}
\appendix\label{app_prof_p1}\vspace{-1mm}
According to the Weinstein–Aronszajn identity\cite{sylvester1851xxxvii}, $\mathrm{det}(\mathbf{I}+\mathbf{AB})=\mathrm{det}(\mathbf{I}+\mathbf{BA})$, and thus ${C}_{\mathrm{E},k}$ in constraint $\mathrm{C2}$ of $\mathcal{P}1$ is equivalent as\vspace{-2mm}
 \begin{align}
& \hspace{-1mm}\log_2\Big(\det(\mathbf{I}_{N_{\mathrm{E}}}\hspace{-1mm}+\hspace{-1mm}
     \mathbf{Q}^{-1}\mathbf{F}_{\mathrm{eq}}\mathbf{w}_k\mathbf{w}_k^{\mathrm{H}}\mathbf{F}_{\mathrm{eq}}^{\mathrm{H}})\Big)\leq C_{k}^{\max}\notag\\[-1mm]
       \Leftrightarrow  \,\,& \log_2\big(1+\hspace{-1mm}
    \mathrm{Tr}( \mathbf{w}_k^{\mathrm{H}}\mathbf{F}_{\mathrm{eq}}^{\mathrm{H}} \mathbf{Q}^{-1}\mathbf{F}_{\mathrm{eq}}\mathbf{w}_k)\big)\leq {C_{k}^{\max}}
       \notag\\[-1mm]
       \Leftrightarrow  \,\,& \mathbf{w}_k^{\mathrm{H}}\mathbf{F}_{\mathrm{eq}}^{\mathrm{H}}\mathbf{Q}^{-1}
       \mathbf{F}_{\mathrm{eq}}\mathbf{w}_k\leq C_{\mathrm{E},k}^{\mathrm{Tol}}\notag
       \\[-1mm]
       \Leftrightarrow  \,\,&
 \mathrm{Tr}(\mathbf{Q}^{-1}\mathbf{F}_{\mathrm{eq}}\mathbf{w}_k\mathbf{w}_k^{\mathrm{H}}
 \mathbf{F}_{\mathrm{eq}}^{\mathrm{H}})\leq C_{\mathrm{E},k}^{\mathrm{Tol}}\notag\\[-1mm]
    \stackrel{\text{(a)}}{\Leftrightarrow}   \,\,&
     \lambda_{\max}\Big\{\mathbf{Q}^{-\frac{1}{2}}\mathbf{F}_{\mathrm{eq}}\mathbf{w}_k\mathbf{w}_k^{\mathrm{H}}\mathbf{F}_{\mathrm{eq}}^{\mathrm{H}}\mathbf{Q}^{-\frac{1}{2}}\Big\}\leq C_{\mathrm{E},k}^{\mathrm{Tol}}\notag\\[-1mm]
     \Leftrightarrow \,\,& C_{\mathrm{E},k}^{\mathrm{Tol}}\mathbf{Q} -\mathbf{F}_{\mathrm{eq}}\mathbf{w}_k\mathbf{w}_k^{\mathrm{H}}\mathbf{F}_{\mathrm{eq}}^{\mathrm{H}} \succeq\mathbf{0}.\label{eq:lemma_1_proof}\\[-7mm]\notag
\end{align}
Note that as $\mathbf{Q}^{-\frac{1}{2}}\mathbf{F}_{\mathrm{eq}}\mathbf{w}_k\times\mathbf{w}_k^{\mathrm{H}}
\mathbf{F}_{\mathrm{eq}}^{\mathrm{H}}\mathbf{Q}^{-\frac{1}{2}}$ is a rank one matrix, (a) holds.
Then, substitute $\mathbf{Q}$ in \eqref{eq:C_E} into \eqref{eq:lemma_1_proof}, the result follows immediately.

\bibliographystyle{IEEEtran}
\bibliography{ActiveIRS-Eve}

\end{document}